\newtheorem{Thm}{Theorem}
\newtheorem{Cor}[Thm]{Corollary}
\theoremstyle{definition}
\begin{document}

\title{Investigating controlled teleportation capability of quantum states with respect to $k$-separability}

\author{Minjin Choi}
\email{mathcmj89@gmail.com }
\affiliation{
Division of National Supercomputing, Korea Institute of Science and Technology Information, Daejeon 34141, Korea
}

\author{Jeonghyeon Shin} 
\affiliation{
Center for Quantum Information, Korea Institute of Science and Technology (KIST), Seoul, 02792, Republic of Korea
}
\affiliation{
Department of Mathematics and Research Institute for Basic Sciences,
Kyung Hee University, Seoul 02447, Korea}

\author{Gunho Lee}
\affiliation{
Department of Mathematics and Research Institute for Basic Sciences,
Kyung Hee University, Seoul 02447, Korea}

\author{Eunok Bae}
\email{eobae@kias.re.kr}
\affiliation{
School of Computational Sciences, Korea Institute for Advanced Study (KIAS), Seoul 02455, Korea}

\date{\today}

\begin{abstract}
Quantum teleportation is an essential application of quantum entanglement. 
The examination of teleportation fidelity in two-party standard teleportation schemes reveals a critical threshold distinguishing separable and entangled states. 
For separable states, their teleportation fidelities cannot exceed the threshold, emphasizing the significance of entanglement. 
We extend this analysis to multi-party scenarios known as controlled teleportation. 
Our study provides thresholds that $N$-qudit $k$-separable states cannot exceed in a controlled teleportation scheme,
where $N \ge 3$ and $2 \le k \le N$.
This not only establishes a standard for utilizing a given quantum state as a resource in controlled teleportation 
but also enhances our understanding of the influence of the entanglement structure on controlled teleportation performance.
In addition, we show that genuine multipartite entanglement is not a prerequisite for achieving a high controlled teleportation capability.
\end{abstract}
\maketitle

\maketitle

\section{Introduction}
\label{sec: Introduction}
Quantum teleportation is a crucial application of quantum entanglement, finding utility in quantum communication, quantum networks, and quantum computation. 
In the standard teleportation scheme, Alice can perfectly transfer quantum information to Bob by using a two-qubit maximally entangled state and two bits of classical information~\cite{BBC93}. 
Nonmaximally entangled states can also be considered in a standard teleportation scheme, 
and we can evaluate the quantum state's usefulness for teleportation through teleportation fidelity~\cite{HHH99, BHH00}. 
It is important to note that there exists a threshold that separable states cannot surpass. 
Therefore, if an entangled state exceeds this threshold, we can assert its usefulness in standard teleportation.

In multi-party scenarios, controlled teleportation is a notable consideration~\cite{KB98, HBB99, LJK05, LJK07, GYL08}.
Here we consider the following controlled teleportation scheme.
Initially, $N$-players share an $N$-qudit state
and $N-2$ out of $N$ players conduct orthogonal measurements on their respective systems.
Subsequently, the remaining players execute the standard teleportation over the resulting state.
The receiver then reconstructs the sender's quantum state with the $N-2$ players' measurement outcomes.
For example, suppose that Alice, Bob, and Charlie share a Greenberger-Horne-Zeilinger (GHZ) state~\cite{GHZ89},
$\ket{\rm{GHZ}}_{ABC}=\left(\ket{000}_{ABC}+\ket{111}_{ABC}\right)/\sqrt{2}$.
If Bob measures his system in the $X$ basis $\left\{\ket{0_{x}}, \ket{1_{x}}\right\}$, where $\ket{i_{x}}=(\ket{0}+(-1)^{i}\ket{1})/\sqrt{2}$,
then the rest players share a two-qubit maximally entangled state, enabling them to execute two-qubit standard teleportation completely.
Without Bob's measurement, 
Alice and Charlie cannot carry out teleportation surpassing the classical channel
since the reduced state $\rho_{AC}$ of the GHZ state is a separable state.
Even if Bob performs the measurement and does not disclose the measurement outcome, Alice and Charlie encounter difficulties in performing teleportation perfectly due to the dependency of the resulting state of Alice and Charlie on the measurement result of Bob.

The controlled teleportation capability in this scheme can be directly evaluated as the average of the two-party teleportation fidelities of the resulting quantum states obtained after the measurements~\cite{LJK05, LJK07}.
By noting that any orthogonal measurements on fully separable states fail to entangle the resulting states,
we can readily obtain a threshold that fully separable states cannot exceed.
However, in multi-party scenarios, considering separability involves not only fully separable states but also other types of separable states.
For instance, in the case of $\ket{\psi}_{ABC}=\left(\ket{000}_{ABC}+\ket{110}_{ABC}\right)/\sqrt{2}$, 
system $A$ is entangled with systems $BC$, while systems $AB$ and $C$ are separable. 
In more detail, for $N \ge 3$ and $2 \le k \le N$,
stating that an $N$-partite pure state is $k$-separable 
means that the quantum state can be separated into $k$ parts. 
An $N$-partite mixed state is $k$-separable 
if it can be represented as a convex sum of $k$-separable pure states.
If an $N$-partite quantum state does not correspond to any case of separability, 
then it is termed as a genuinely multipartite entangled state.  
Therefore, refining the threshold for the controlled teleportation capability based on $k$-separability is significant for utilizing a quantum state as a resource in controlled teleportation
and for understanding the effect of the entanglement structure of quantum states in controlled teleportation.

In this work, we provide thresholds that $k$-separable states cannot surpass in the controlled teleportation scheme, 
thereby elucidating the impact of the entanglement structure on the controlled teleportation performance. 
We also demonstrate that values equal to the thresholds can be achieved using $k$-separable states, 
indicating that our thresholds are tight criteria.
By using these results, we observe that a high controlled teleportation capability can be attained by a biseparable state,
suggesting that genuine multipartite entanglement is not a precondition for achieving a high controlled teleportation capability.
However, we also show that a high controlled teleportation capability is achievable only when the number of entangled systems in the pure states constituting the biseparable state is sufficiently close to $N$.

This paper is organized as follows. 
We introduce the maximal average teleportation fidelity of the controlled teleportation over an $N$-qudit state in Sec.~\ref{sec: telecapa}
and investigate the controlled teleportation capability in Sec.~\ref{sec: usefulness}.
An example with the isotropic GHZ state is provided in Sec.~\ref{sec: isotropicGHZ}, 
followed by a summary and discussion of our results in Sec.~\ref{sec: conclusion}.

\section{Teleportation capability in controlled teleportation} 
\label{sec: telecapa}

We begin by reviewing the maximal fidelity of teleportation over a two-qudit state $\rho_{AB} \in \mathbb{C}^{d}_{A} \otimes \mathbb{C}^{d}_{B}$.
The teleportation fidelity is defined as~\cite{P94}
\begin{equation}
F\left(\Lambda_{\rho_{AB}}\right)=\int d\xi\bra{\xi}\Lambda_{\rho_{AB}}\left(\ket{\xi}\bra{\xi}\right)\ket{\xi},
\end{equation}
where $\Lambda_{\rho_{AB}}$ represents a given teleportation scheme  over a two-qudit state $\rho_{AB}$
and the integral is taken over a uniform distribution with respect to all one-qudit pure states.
It is worth noting that the teleportation fidelity is associated with the fully entangled fraction of $\rho_{AB}$~\cite{BDS96}, 
defined as
\begin{equation}
\label{fraction}
f\left(\rho_{AB}\right)=\rm{max}\bra{e}\rho_{AB}\ket{e},
\end{equation}
where the maximum is taken over all maximally entangled states $\ket{e}$ of two-qudit systems.
When $\bar{\Lambda}_{\rho_{AB}}$ represents the standard teleportation scheme over $\rho_{AB}$ to achieve maximal fidelity, 
it has been proven that the following equation holds~\cite{HHH99, BHH00}: 
\begin{equation}
\label{two-qudit_fidel}
F\left(\bar{\Lambda}_{\rho_{AB}}\right)=\frac{df\left(\rho_{AB}\right)+1}{d+1}.
\end{equation}
The given state $\rho_{AB}$ is said to be useful for teleportation 
if $F\left(\bar{\Lambda}_{\rho_{AB}}\right)>2/(d+1)$,
as the classical channel can have at most $F=2/(d+1)$~\cite{HHH96, HHH99}.
In other words, two-qudit separable states cannot exhibit a teleportation fidelity $F$ exceeding $2/(d+1)$.

Let us now consider the following controlled teleportation.
Suppose that $N \ge 3$ players, $A_{1}$, $A_{2}$, ..., $A_{N}$, share an $N$-qudit state $\rho_{A_{1} \cdots A_{N}} \in \mathbb{C}^{d}_{A_{1}} \otimes \cdots \otimes \mathbb{C}^{d}_{A_{N}}$.
Following individual orthogonal measurements on their respective systems by all players except $A_{i}$ and $A_{j}$,
the players $A_{i}$ and $A_{j}$ carry out the standard teleportation over the resulting state with the measurement outcomes.
We note that a one-qudit orthogonal measurement can be described in terms of the computational basis $\{\ket{k}\}_{k=0}^{d-1}$ and $d \times d$ unitary operator $U$.
Thus we can say that
the players $A_{i}$ and $A_{j}$ 
have the resulting state $\sigma_{A_{i}A_{j}}^{U_{K_{ij}}, J}$
with probability $\bra{J}U_{K_{ij}}\rho_{K_{ij}}U_{K_{ij}}^{\dagger}\ket{J}$
after the $N-2$ players' measurements.
Here,
$K_{ij}=\{A_{k_{1}}, A_{k_{2}}, \dots, A_{k_{N-2}}\}=\{A_{1},A_{2}, \dots, A_{N}\} \setminus \{A_{i}, A_{j}\}$,
$U_{K_{ij}}=U_{A_{k_{1}}} \otimes U_{A_{k_{2}}} \otimes \cdots \otimes U_{A_{k_{N-2}}}$ is a tensor product of local unitary operators,
$J \in \mathbb{Z}_{d}^{N-2}$ is the $N-2$ players' measurement outcome,
and
$\rho_{K_{ij}}=\rho_{A_{k_{1}}} \otimes \rho_{A_{k_{2}}} \otimes \cdots \otimes \rho_{A_{k_{N-2}}}$ with the reduced state $\rho_{A_{k_{t}}}$ on system $A_{k_{t}}$.
Then, the maximal average teleportation fidelity $F_{ij}^{(N)}$ is naturally derived as follows~\cite{LJK05, LJK07}:
\begin{equation}
F_{ij}^{(N)}\left(\rho_{A_{1} \cdots A_{N}}\right)
=\max_{U_{K_{ij}}}\sum_{J\in \mathbb{Z}_{d}^{N-2}}
\bra{J}U_{K_{ij}}\rho_{K_{ij}}U_{K_{ij}}^{\dagger}\ket{J}
F\left(\bar{\Lambda}_{\sigma_{A_{i}A_{j}}^{U_{K_{ij}}, J}}\right).
\end{equation}
By linearity and Eq.~(\ref{two-qudit_fidel}), we obtain
\begin{equation}
\label{eq:tele_fidel}
F_{ij}^{(N)}\left(\rho_{A_{1} \cdots A_{N}}\right)=\frac{df_{ij}^{(N)}\left(\rho_{A_{1} \cdots A_{N}}\right)+1}{d+1},
\end{equation}
where
\begin{equation}
\label{eq:tele_frac_N}
f_{ij}^{(N)}(\rho_{A_{1} \cdots A_{N}})
=\max_{U_{K_{ij}}}\sum_{J\in \mathbb{Z}_{d}^{N-2}}
\bra{J}U_{K_{ij}}\rho_{K_{ij}}U_{K_{ij}}^{\dagger}\ket{J}
f\left(\sigma_{A_{i}A_{j}}^{U_{K_{ij}}, J}\right).
\end{equation}

To assess the overall controlled teleportation capability of $\rho_{A_{1} \cdots A_{N}}$,
we consider the minimum of $F_{ij}^{(N)}$, where the minimum is taken over all distinct $i$ and $j$.
The minimum of $F_{ij}$ not only implies that $\rho_{A_{1} \cdots A_{N}}$ has at least that much controlled teleportation capability for any arbitrary $i$ and $j$ according to its own definition,
but it can also be anticipated to be an important quantity for investigating the relationship with genuine multipartite entanglement.
Indeed, it is related to GHZ distillability~\cite{LJK07} and can be used to define a genuine multipartite entanglement measure~\cite{CBL23}.

We can assert that $\rho_{A_{1} \cdots A_{N}}$ is more useful for controlled teleportation than fully separable states if $\min\{F_{ij}^{(N)}\}>2/(d+1)$, where the minimum is taken over all different $i$ and $j$.
This assertion stems from the observation that if $\rho_{A_{1} \cdots A_{N}}$ is a fully separable state, then the resulting state after $N-2$ players' measurements is a two-qudit separable state.
However, when considering multi-party scenarios, different types of separable states can also be taken into account.
In the following section, we investigate the controlled teleportation capability of quantum states in the controlled teleportation scheme based on the degree of separability.

\section{Investigating the controlled Teleportation capability} 
\label{sec: usefulness}
We first introduce some definitions and notations.
For an $N$-partite pure state $\ket{\psi}$, where $N \ge 3$ and $2 \le k \le N$, it is called $k$-separable 
if there exists a partition $\mathcal{P}_{k}=\{ P_{1}, P_{2}, ..., P_{k}\}$ of the set 
$\{A_{1}, A_{2}, ..., A_{N}\}$ such that 
it can be written as $\ket{\psi}=\ket{\psi_{P_{1}}} \otimes \ket{\psi_{P_{2}}} \otimes \cdots \otimes \ket{\psi_{P_{k}}}$.
In this case, we denote $ \ket{\psi} \in SEP(k)$.
When emphasizing the partition $\mathcal{P}_k$, we write
$\ket{\psi} \in SEP\left(\mathcal{P}_{k}\right)$.
An $N$-partite mixed state $\rho$ is called $k$-separable if it can be written as a convex sum of $k$-separable pure states $\rho=\sum_{l}p_{l}\ket{\psi_{l}}\bra{\psi_{l}}$, where the $k$-separable pure states $\{\ket{\psi_{l}}\}$ can be $k$-separable with respect to different partitions,
and we denote $\rho \in SEP(k)$.
In particular, if there exists a partition $\mathcal{P}_{k}$ such that $\ket{\psi_{l}} \in SEP\left(\mathcal{P}_{k}\right)$ for all $l$, then let us denote $\rho \in SEP\left(\mathcal{P}_{k}\right)$.
We note that $SEP(k) \subset SEP(k-1)$ for $3 \le k \le N$.  For an $N$-partite state $\rho$, it is called fully separable if $\rho \in SEP(N)$, and biseparable if $\rho \in SEP(2)$.
If an $N$-partite state is not biseparable,
then it is a genuinely $N$-partite entangled state.

The controlled teleportation capability of quantum states with respect to $k$-separability is under investigation.
Let $\rho_{A_{1} \cdots A_{N}}$ be an $N$-qudit $k$-separable state. 
By definition, it can be written as
\begin{equation}
\label{rep_sep}
\rho_{A_{1} \cdots A_{N}} = \sum_{\mathcal{P}_{k}}\alpha_{\mathcal{P}_{k}}\rho_{\mathcal{P}_{k}},
\end{equation}
where the sum is over all partitions $\mathcal{P}_{k}$ of size $k$, $\alpha_{\mathcal{P}_{k}} \ge 0$, and $\rho_{\mathcal{P}_{k}} \in SEP(\mathcal{P}_{k})$.
For a partition $\mathcal{P}_{k}$, 
if $\{A_{i}, A_{j}\} \in \mathcal{P}_{k}$,
then $f_{ij}^{(N)}\left(\rho_{\mathcal{P}_{k}}\right)$ is at most 1,
and otherwise, $f_{ij}^{(N)}\left(\rho_{\mathcal{P}_{k}}\right) \le 1/d$.
Hence, we obtain that
\begin{eqnarray}
\label{f_ij_upper}
f_{ij}^{(N)}(\rho_{A_1\cdots A_N}) 
&\le&
\sum_{\mathcal{P}_{k}}\alpha_{\mathcal{P}_{k}}f_{ij}^{(N)}(\rho_{\mathcal{P}_{k}}) \nonumber\\
&\le& \frac{1}{d}+\frac{d-1}{d}\sum_{\mathcal{P}_{k}: \{A_{i},A_{j}\} \in \mathcal{P}_{k}}\alpha_{\mathcal{P}_{k}}.
\end{eqnarray}
We now derive the following theorem.
\begin{Thm}
\label{thm1}
Let
\begin{equation}
\label{threshold}
T(d, N, k) = \frac{2}{d+1}+\frac{d-1}{d+1}\frac{(N-k+1)(N-k)}{N(N-1)}.
\end{equation}
For an $N$-qudit state $\rho_{A_1\cdots A_N}$,
if $\rho_{A_1\cdots A_N} \in SEP(k)$, then
\begin{equation}
\label{eq_thm2}
\min_{1\le i<j\le N}F_{ij}^{(N)}\left(\rho_{A_{1}\cdots A_{N}}\right) \le T(d, N, k).
\end{equation}
\end{Thm}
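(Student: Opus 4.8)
The plan is to convert the fidelity bound into a purely combinatorial counting problem and then extract the factor $(N-k+1)(N-k)/[N(N-1)]$ by an averaging argument. Starting from the decomposition in Eq.~(\ref{rep_sep}), where taking the trace forces $\sum_{\mathcal{P}_k}\alpha_{\mathcal{P}_k}=1$, I would combine the upper bound in Eq.~(\ref{f_ij_upper}) with the linear relation Eq.~(\ref{eq:tele_fidel}) to obtain, for every pair $i<j$,
\begin{equation}
F_{ij}^{(N)}\left(\rho_{A_1\cdots A_N}\right)\le \frac{2}{d+1}+\frac{d-1}{d+1}\,\beta_{ij},\qquad \beta_{ij}:=\sum_{\mathcal{P}_k:\,\{A_i,A_j\}\in\mathcal{P}_k}\alpha_{\mathcal{P}_k},
\end{equation}
so that $\beta_{ij}$ is the total weight carried by those partitions that place $A_i$ and $A_j$ in a common block. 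Since the prefactors are fixed and positive, bounding $\min_{i<j}F_{ij}^{(N)}$ reduces to bounding $\min_{i<j}\beta_{ij}$.

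The next step is to replace the minimum by the average over all $\binom{N}{2}$ pairs, using $\min_{i<j}\beta_{ij}\le \frac{1}{\binom{N}{2}}\sum_{i<j}\beta_{ij}$. Interchanging the order of summation, the double sum becomes $\sum_{\mathcal{P}_k}\alpha_{\mathcal{P}_k}\,c(\mathcal{P}_k)$, where $c(\mathcal{P}_k)$ counts the pairs lying in a common block of $\mathcal{P}_k$; if $\mathcal{P}_k$ has block sizes $n_1,\dots,n_k$ with $\sum_t n_t=N$ and each $n_t\ge 1$, then $c(\mathcal{P}_k)=\sum_t\binom{n_t}{2}$.

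The combinatorial heart of the argument, and the step I expect to be the main obstacle, is to maximize $\sum_t\binom{n_t}{2}$ over all such compositions. Because $\binom{n}{2}$ is convex, the sum is maximized by concentrating the mass as much as possible into a single block; I would make this rigorous by an exchange argument, noting that moving one system from a block of size $a$ to a block of size $b\ge a$ changes the sum by $b-a+1\ge 1>0$, so the optimum has one block of size $N-k+1$ and $k-1$ singletons. This yields the uniform bound $c(\mathcal{P}_k)\le\binom{N-k+1}{2}$ for every partition of size $k$.

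Substituting this bound and using $\sum_{\mathcal{P}_k}\alpha_{\mathcal{P}_k}=1$ gives
\begin{equation}
\min_{i<j}\beta_{ij}\le\frac{\binom{N-k+1}{2}}{\binom{N}{2}}=\frac{(N-k+1)(N-k)}{N(N-1)}.
\end{equation}
Feeding this back into the per-pair fidelity inequality reproduces exactly $T(d,N,k)$ of Eq.~(\ref{threshold}), establishing Eq.~(\ref{eq_thm2}). The remaining parts of the argument (the trace normalization and the order interchange) are routine, so the only genuinely substantive point is the convexity-based extremal identification of the worst-case partition.
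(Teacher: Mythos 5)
Your proposal is correct and follows essentially the same route as the paper's proof: decompose over partitions, bound each $f_{ij}^{(N)}$ by the weight of partitions placing $A_i,A_j$ in a common block, average over all $\binom{N}{2}$ pairs, and maximize $\sum_t\binom{|P_t|}{2}$ at the partition with one block of size $N-k+1$ and $k-1$ singletons. The only difference is cosmetic — you work at the level of $F_{ij}^{(N)}$ and supply an explicit exchange argument for the extremal partition, a step the paper merely asserts.
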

\begin{proof}
Assume that $\rho_{A_1\cdots A_N}$ is a $k$-separable state
of the form in Eq~(\ref{rep_sep}).
Then, by Eq.~(\ref{f_ij_upper}), $f_{ij}^{(N)}\left(\rho_{A_{1}\cdots A_{N}}\right)$
is upper bounded by
\begin{equation}
\frac{1}{d} + \frac{d-1}{d}\sum_{\mathcal{P}_{k}: \{A_{i},A_{j}\} \in \mathcal{P}_{k}}\alpha_{\mathcal{P}_{k}}.
\end{equation}
Since we consider all $i$ and $j$ such that $1 \le i < j \le N$,
there are $\binom{N}{2}$ upper bounds.
For a partition $\mathcal{P}_{k}$,  
$\alpha_{\mathcal{P}_{k}}$ appears in the upper bound of $f_{pq}^{(N)}\left(\rho_{A_{1}\cdots A_{N}}\right)$
only when $\{A_{p},A_{q}\} \in \mathcal{P}_{k}$.
Hence, the number of upper bounds where $\alpha_{\mathcal{P}_{k}}$ appears is $\sum_{t=1}^{k}\binom{|P_{t}|}{2}$, 
where $P_{t} \in \mathcal{P}_{k}$ and $|S|$ is the size of the set $S$.
Therefore,
\begin{eqnarray}
\label{thm2_proof_eq}
\sum_{1\le i<j\le N}f_{ij}^{(N)}\left(\rho_{A_{1}\cdots A_{N}}\right) 
&\le& \frac{1}{d}\binom{N}{2}+\frac{d-1}{d}\sum_{\mathcal{P}_{k}}\sum_{t=1}^{k}\binom{|P_{t}|}{2}\alpha_{\mathcal{P}_{k}} \nonumber \\
&\le& \frac{1}{d}\binom{N}{2}+\frac{d-1}{d}\binom{N-k+1}{2}\sum_{\mathcal{P}_{k}}\alpha_{\mathcal{P}_{k}} \nonumber \\
&=& \frac{1}{d}\binom{N}{2}+\frac{d-1}{d}\binom{N-k+1}{2}, 
\end{eqnarray}
where the second inequality is derived from the fact that $\sum_{t=1}^{k}\binom{|P_{t}|}{2}$ is maximized when one of the sets in partition $\mathcal{P}_{k}$ has a size of $N-k+1$ and the sizes of the remaining sets are all one.
The proof is completed by dividing both sides by $\binom{N}{2}$
since the minimum cannot exceed the average.
\end{proof}

\begin{figure}[t]
\label{fig_1}
\centering
\includegraphics[width=8.5cm]{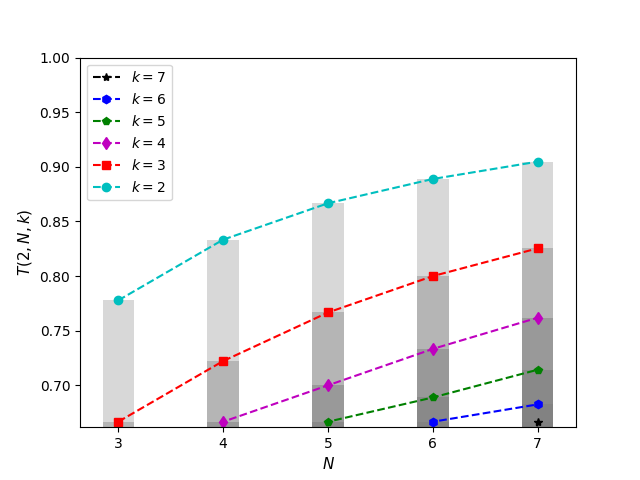}
\caption{
For $3 \le N \le 7$ and $2 \le k \le N$, $T(2, N, k)$ is expressed.
In any case, if the minimum of $F_{ij}^{(N)}$ for a given quantum state is greater than $2/3$, then the quantum state is more useful than fully separable states in controlled teleportation.
However, as $N$ increases, the threshold $T(2, N, 2)$ that a genuinely multipartite entangled state must surpass to outperfom any biseparable state in controlled teleportation also increases.
}
\end{figure}

Theorem~\ref{thm1} introduces the limitations of controlled teleportation capability achievable through $N$-qudit $k$-separable states, as depicted in FIG.~1.
Remark that when considering only pure states, 
$T(d, N, k)$ is a loose bound 
since 
\begin{equation}
\min_{1\le i<j\le N}F_{ij}^{(N)}\left(\ket{\psi}_{A_{1}\cdots A_{N}}\right)=\frac{2}{d+1}
\end{equation}
for any $N$-qudit biseparable pure state $\ket{\psi}_{A_{1}\cdots A_{N}}$.
However, there is an $N$-qudit $k$-separable mixed state 
that attains $T(d, N, k)$ as the minimum of $F_{ij}^{(N)}$ as follows below. 
In other words, $T(d, N, k)$ can be considered as a tight bound 
if we take account of all quantum states.

\begin{Thm}
\label{thm2}
There exists an $N$-qudit $k$-separable state $\rho$
such that 
\begin{equation}
\min_{1\le i<j\le N}F_{ij}^{(N)}\left(\rho\right) = T(d, N, k).
\end{equation}
\end{Thm}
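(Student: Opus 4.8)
The plan is to produce an explicit $k$-separable state that saturates every inequality used in the proof of Theorem~\ref{thm1}, so that the chain of bounds there collapses to an equality. Since Theorem~\ref{thm1} already gives $\min_{i<j}F_{ij}^{(N)}(\rho)\le T(d,N,k)$ for every $\rho\in SEP(k)$, it suffices to exhibit one state for which the reverse inequality also holds. I would look for a permutation-invariant $\rho$ supported only on \emph{extremal} partitions --- those with a single block $B$ of size $N-k+1$ and $k-1$ singletons --- since these are exactly the partitions achieving $\sum_t\binom{|P_t|}{2}=\binom{N-k+1}{2}$, i.e.\ the second inequality in Eq.~(\ref{thm2_proof_eq}) becomes an equality. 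Permutation invariance makes all $F_{ij}^{(N)}$ equal, so that $\min_{i<j}F_{ij}^{(N)}$ coincides with their average and the final ``min $\le$ average'' step is also an equality. Concretely I would start from a uniform mixture over all choices of the $(N-k+1)$-element block, with a genuinely entangled (GHZ-type) state on the block and a fixed product state on the singletons; this is manifestly in $SEP(k)$.

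With such a symmetric $\rho$ the upper bound is immediate from Theorem~\ref{thm1}, so the whole content is the matching lower bound, which by Eq.~(\ref{eq:tele_fidel}) amounts to showing $f_{12}^{(N)}(\rho)\ge f_{\mathrm{target}}$, where $f_{\mathrm{target}}=\tfrac1d+\tfrac{d-1}{d}\,\tfrac{(N-k+1)(N-k)}{N(N-1)}$ is the value forced by $T(d,N,k)$ through $F=(df+1)/(d+1)$. I would establish this by displaying a single controlled-teleportation protocol: a fixed local measurement for each controller together with an outcome-dependent choice of maximally entangled reference state for the two-qudit teleportation, and then computing $\sum_J \bra{J}\cdots\ket{J}\,f(\sigma^{U,J})$ directly. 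The target value is exactly $\tfrac{1}{\binom{N}{2}}\big[\tfrac1d\binom{N}{2}+\tfrac{d-1}{d}\binom{N-k+1}{2}\big]$, i.e.\ the per-pair average of the component values $f_{12}^{(N)}(\rho_{\mathcal P_k})$, so the lower bound is equivalent to saturating the \emph{first} inequality of Eq.~(\ref{f_ij_upper}) --- the subadditivity $f_{12}^{(N)}(\sum_{\mathcal P}\alpha_{\mathcal P}\rho_{\mathcal P})\le\sum_{\mathcal P}\alpha_{\mathcal P}f_{12}^{(N)}(\rho_{\mathcal P})$ --- by one common measurement.

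The hard part is exactly this saturation of subadditivity, which is a genuine alignment problem. For a block $B\ni\{1,2\}$, measuring the members of $B\setminus\{1,2\}$ in the conjugate (Fourier) basis leaves a maximally entangled pair on systems $1,2$, but the particular maximally entangled state depends on those outcomes; two different components whose blocks both contain $\{1,2\}$ generically leave \emph{different} (mutually orthogonal) maximally entangled states for the same global outcome, so a single reference $\ket{e}$ cannot be optimal for both and the naive average falls strictly below $f_{\mathrm{target}}$. The key step I would focus on is therefore engineering the construction so that, for each outcome, all components that contribute entanglement to a given pair leave the \emph{same} maximally entangled state: candidate mechanisms are (i) choosing the block states and the measurement bases so the extracted phases coincide, or (ii) making the components distinguishable from the outcome record (an orthogonal ``flag'' on the singletons) so that the post-measurement ensemble conditioned on the full outcome is a single pure state. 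Once alignment is secured, each contributing outcome gives $f=1$ and each separable contribution gives $1/d$, the sum telescopes to the value above, and equality with Theorem~\ref{thm1} follows; verifying that a feasible construction realizes this alignment simultaneously for all $N$, $k$, and $d$ is the crux of the argument.
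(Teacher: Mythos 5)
Your proposal follows essentially the same route as the paper: the witness state used there is exactly your uniform mixture over $(N-k+1)$-element blocks, carrying a GHZ-type state $\ket{\phi_{N-k+1,0}}\propto\sum_{i_1+\cdots+i_{N-k+1}\equiv 0 \pmod d}\ket{i_1\cdots i_{N-k+1}}$ on the block with the $k-1$ singletons fixed in $\ket{0}$, and the min-equals-average and extremal-partition saturations enter just as you describe. The alignment problem you isolate as the crux is resolved by your own mechanism (i): with all controllers measuring in the computational basis, the singletons contribute nothing to the outcome sum, so every component whose entangled block contains $\{A_1,A_2\}$ collapses to the \emph{same} maximally entangled state $\ket{\phi_{2,d-s}}$ determined only by the total outcome $s$, and the resulting direct computation gives $f_{12}^{(N)}(\rho)=\tfrac{1}{d}+\tfrac{d-1}{d}\binom{N-2}{k-1}/\binom{N}{k-1}$, which matches $T(d,N,k)$ for all $N$, $k$, and $d$.
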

\begin{proof}
For $t \in \mathbb{Z}_{d}$, let 
\begin{equation}
\ket{\phi_{M, t}} \equiv \frac{1}{\sqrt{d^{M-1}}}\sum_{i_{1}+\cdots+i_{M} \equiv t\pmod{d}}\ket{i_{1} \cdots i_{M}},
\end{equation}
where $i_{j} \in \mathbb{Z}_{d}$.
Consider the following symmetric $N$-qudit $k$-separable state
\begin{equation}
\label{thm3_state}
\rho = \frac{1}{\binom{N}{k-1}}\sum_{\mathcal{A}_{k-1}} 
\ket{0 \cdots 0}_{\mathbf{A}_{k-1}}\bra{0 \cdots 0}
\otimes \ket{\phi_{N-k+1, 0}}_{\bar{\mathbf{A}}_{k-1}}\bra{\phi_{N-k+1, 0}},
\end{equation}
where the sum is over all subsets $\mathcal{A}_{k-1}$ of $\mathcal{A}=\{A_{1}, A_{2}, ..., A_{N}\}$ with size $k-1$,
$\mathbf{A}_{k-1}$ and $\bar{\mathbf{A}}_{k-1}$
correspond to the system of $\mathcal{A}_{k-1}$ and $\mathcal{A} \setminus \mathcal{A}_{k-1}$, respectively.

Assume that players $A_{3}, ..., A_{N}$ measure their respective systems in the computational basis $\{\ket{0}, \ket{1}, ... ,\ket{d-1}\}$,
and they get the outcome $i_{3} \cdots i_{N} \in \mathbb{Z}_{d}^{N-2}$ with $i_{3} + \cdots + i_{N} \equiv s \pmod{d}$.
Let $\sigma_{A_{1}A_{2}}^{(i_{3} \cdots i_{N})}$ be the resulting state of players $A_{1}$ and $A_{2}$.
From tedious but straightforward calculations,
we have that
\begin{eqnarray}
&&prob(i_{3} \cdots i_{N})
\bra{\phi_{2, 0}}\sigma_{A_{1}A_{2}}^{(i_{3} \cdots i_{N})}\ket{\phi_{2, 0}} \nonumber \\
&& =\frac{1}{\binom{N}{k-1}}\sum_{\mathcal{A}_{k-1}: \{A_{1}, A_{2}\} \not\subset \mathcal{A}\setminus\mathcal{A}_{k-1}} \prod_{j: A_{j} \in \mathcal{A}_{k-1}}\langle i_{j} | 0 \rangle  \frac{1}{d^{N-k+1}} \nonumber \\
&& +\frac{1}{\binom{N}{k-1}}\sum_{\mathcal{A}_{k-1}: \{A_{1}, A_{2}\} \subset \mathcal{A} \setminus \mathcal{A}_{k-1}} \prod_{j: A_{j} \in \mathcal{A}_{k-1}}\langle i_{j} | 0 \rangle  \frac{1}{d^{N-k-1}}
\end{eqnarray}
if $s \equiv 0\pmod{d}$, and
\begin{eqnarray}
&&prob(i_{3} \cdots i_{N})
\bra{\phi_{2, d-s}}\sigma_{A_{1}A_{2}}^{(i_{3} \cdots i_{N})}\ket{\phi_{2, d-s}} \nonumber \\
&&=
\frac{1}{\binom{N}{k-1}}\sum_{\mathcal{A}_{k-1}: |\{A_{1}, A_{2}\} \cap \mathcal{A}_{k-1}|=1} \prod_{j: A_{j} \in \mathcal{A}_{k-1}} \langle i_{j} | 0 \rangle  \frac{1}{d^{N-k+1}} \nonumber \\ 
&& +\frac{1}{\binom{N}{k-1}}\sum_{\mathcal{A}_{k-1}: \{A_{1}, A_{2}\} \subset \mathcal{A} \setminus \mathcal{A}_{k-1}} \prod_{j: A_{j} \in \mathcal{A}_{k-1}} \langle i_{j} | 0 \rangle  \frac{1}{d^{N-k-1}}
\end{eqnarray}
if $s \not\equiv 0 \pmod{d}$.
Hence, by the definition of $f_{ij}^{(N)}$,
\begin{eqnarray}
f_{12}^{(N)}(\rho) &\ge&
\sum_{i_{3}, \cdots, i_{N}} prob(i_{3} \cdots i_{N}) f\left(\sigma_{A_{1}A_{2}}^{(i_{3} \cdots i_{N})}\right) \nonumber \\
&\ge&
\frac{1}{d}\sum_{\mathcal{A}_{k-1}: \{A_{1}, A_{2}\} \not\subset \mathcal{A}\setminus\mathcal{A}_{k-1}} \frac{1}{\binom{N}{k-1}} \nonumber \\
&&+\sum_{\mathcal{A}_{k-1}: \{A_{1}, A_{2}\} \subset \mathcal{A}\setminus\mathcal{A}_{k-1}} \frac{1}{\binom{N}{k-1}} \nonumber \\
&=&
\frac{1}{d}+\frac{d-1}{d}\sum_{\mathcal{A}_{k-1}: \{A_{1}, A_{2}\} \subset \mathcal{A}\setminus\mathcal{A}_{k-1}} \frac{1}{\binom{N}{k-1}}.
\end{eqnarray}
It follows from Eq.~(\ref{f_ij_upper}) that
\begin{eqnarray}
f_{12}^{(N)}(\rho) &=& \frac{1}{d}+\frac{d-1}{d}\sum_{\mathcal{A}_{k-1}: \{A_{1}, A_{2}\} \subset \mathcal{A}\setminus\mathcal{A}_{k-1}} \frac{1}{\binom{N}{k-1}} \nonumber \\
&=& \frac{1}{d}+\frac{d-1}{d}\frac{\binom{N-2}{k-1}}{\binom{N}{k-1}} \nonumber \\
&=& \frac{1}{d}\binom{N}{2}+\frac{d-1}{d}\binom{N-k+1}{2}
\end{eqnarray}
Therefore,
\begin{equation}
F_{12}^{(N)}\left(\rho_{A_{1}\cdots A_{N}}\right) = T(d, N, k),
\end{equation}
and the symmetry completes the proof.
\end{proof}

Let us think about genuinely $N$-partite entangled states.
Since
\begin{equation}
T(d,N,2)=1-\frac{2(d-1)}{N(d+1)}
\end{equation}
and $T(d,N,2)$ converges to 1 as $N$ goes to infinity,
Theorem~\ref{thm2} tells us that a high controlled teleportation capability can be attained by using a biseparable state for sufficiently large $N$.
In other words, genuine multipartite entanglement is not a prerequisite for achieving a high controlled teleportation capability.

However, when we examine the pure states constituting the biseparable state in Eq.~(\ref{thm3_state}), 
$N-1$ out of $N$ systems are entangled.
Considering this fact, 
one might wonder if achieving a high controlled teleportation capability 
requires a sufficient number of systems close to $N$ to be entangled in the pure states constituting the biseparable state.
To clarify this, we further refine the set of biseparable states based on the number of entangled subsystems within the states constituting the biseparable state,
and examine the corresponding threshold for the controlled teleportation capability.
The thresholds for these cases can be obtained by slightly modifying the proof of Theorem~\ref{thm1}.

\begin{figure}[t]
\label{fig_2}
\centering
\includegraphics[width=8.5cm]{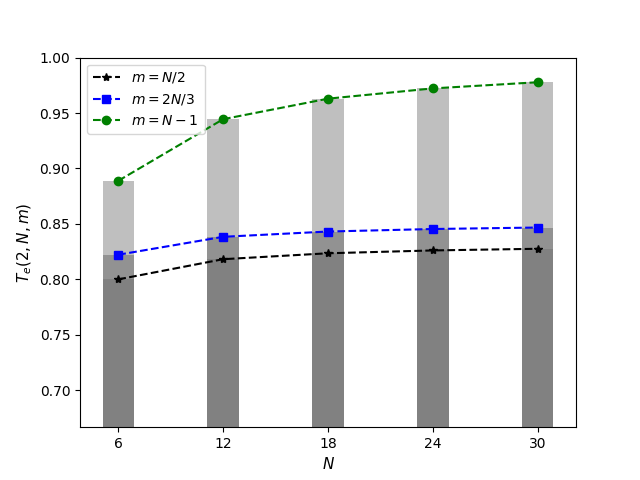}
\caption{
The upper bound $T_{e}(2, N, m)$ in Eq.~(\ref{cor_thsd}) is plotted, where $N=6t$ for $1 \le t \le 5$ and $m=N/2, 2N/3, N-1$.
As $N$ increases, $T_{e}(2, N, N-1)$ approaches $1$, while $T_{e}(2, N, N/2)$ and $T_{e}(2, N, 2N/3)$ cannot exceed 0.834 and 0.852, respectively.
}
\end{figure}

\begin{Cor}
\label{cor}
Suppose that the number of systems that can be entangled in pure states composing an ensemble of an N-qudit biseparable state $\rho$ is limited to $m$, where $m \ge \lceil N/2 \rceil$.
Then
\begin{equation}
\label{eq_cor}
\min_{1\le i<j\le N}F_{ij}^{(N)}\left(\rho_{A_{1}\cdots A_{N}}\right) \le T_{e}(d, N, m),
\end{equation}
where
\begin{equation}
\label{cor_thsd}
T_{e}(d, N, m) = 1 - \frac{2(d-1)m(N-m)}{(d+1)N(N-1)}
\end{equation}
\end{Cor}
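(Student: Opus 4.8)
The plan is to adapt the counting argument in the proof of Theorem~\ref{thm1}, replacing the fixed block count $k$ by the constraint that every entangled cluster has size at most $m$. First I would write $\rho$ as a convex combination $\rho = \sum_{l} p_{l}\ket{\psi_{l}}\bra{\psi_{l}}$ of biseparable pure states. By hypothesis each component $\ket{\psi_{l}}$ admits a product decomposition into entangled clusters, giving a partition $\mathcal{Q}^{(l)} = \{Q_{1}^{(l)}, \dots, Q_{r_{l}}^{(l)}\}$ of $\{A_{1}, \dots, A_{N}\}$ with $|Q_{t}^{(l)}| \le m$ for every $t$. Exactly as in Eq.~(\ref{f_ij_upper}), the weight $p_{l}$ contributes to the upper bound of $f_{pq}^{(N)}$ precisely when $\{A_{p}, A_{q}\}$ lies inside a single cluster of $\mathcal{Q}^{(l)}$, so it appears in $\sum_{t}\binom{|Q_{t}^{(l)}|}{2}$ of the $\binom{N}{2}$ bounds.

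Summing the pairwise bounds over all $1 \le i < j \le N$ and exchanging the order of summation, I would obtain
\begin{equation}
\sum_{1 \le i < j \le N} f_{ij}^{(N)}(\rho) \le \frac{1}{d}\binom{N}{2} + \frac{d-1}{d}\sum_{l} p_{l} \sum_{t} \binom{|Q_{t}^{(l)}|}{2}.
\end{equation}
The heart of the argument is then a purely combinatorial maximization: among all partitions of $N$ systems into parts of size at most $m$, the quantity $\sum_{t}\binom{|Q_{t}|}{2}$ is maximized by the two-part partition of sizes $m$ and $N-m$. Since $\binom{a+b}{2} \ge \binom{a}{2}+\binom{b}{2}$, merging clusters only increases the sum, so the extremizer uses as few parts as possible; the hypothesis $m \ge \lceil N/2 \rceil$ guarantees $N-m \le m$, so the remaining $N-m$ systems fit into a single additional cluster and a two-part configuration is feasible. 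Convexity of $a \mapsto \binom{a}{2}+\binom{N-a}{2}$ on the feasible interval $N-m \le a \le m$ then forces the maximum to the endpoint $a=m$, yielding $\sum_{t}\binom{|Q_{t}^{(l)}|}{2} \le \binom{m}{2}+\binom{N-m}{2}$ for every $l$.

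Using $\sum_{l}p_{l}=1$, dividing by $\binom{N}{2}$, and invoking that the minimum cannot exceed the average, I would reach
\begin{equation}
\min_{1 \le i < j \le N} f_{ij}^{(N)}(\rho) \le \frac{1}{d} + \frac{d-1}{d}\,\frac{\binom{m}{2}+\binom{N-m}{2}}{\binom{N}{2}}.
\end{equation}
Finally I would insert the identity $\binom{m}{2}+\binom{N-m}{2} = \binom{N}{2} - m(N-m)$ and convert the bound on $f_{ij}^{(N)}$ into one on $F_{ij}^{(N)}$ via the monotone relation in Eq.~(\ref{eq:tele_fidel}); the algebra collapses to $T_{e}(d,N,m) = 1 - 2(d-1)m(N-m)/[(d+1)N(N-1)]$.

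I expect the main obstacle to be pinning down the correct extremal configuration and justifying the role of $m \ge \lceil N/2 \rceil$: it is exactly this bound that rules out three or more clusters of size $m$ being optimal, so that the extremum is the clean two-part split $(m, N-m)$ rather than a more balanced partition that would arise for smaller $m$. The remainder is bookkeeping identical in spirit to the proof of Theorem~\ref{thm1}.
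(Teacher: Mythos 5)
Your proposal is correct and follows essentially the same route as the paper: both rework the first inequality in Eq.~(\ref{thm2_proof_eq}) by replacing the combinatorial maximum with $\binom{m}{2}+\binom{N-m}{2}$ and then divide by $\binom{N}{2}$. The paper states this in two lines, while you additionally spell out the extremal argument (superadditivity of $\binom{\cdot}{2}$, feasibility of the two-part split under $m \ge \lceil N/2\rceil$, and convexity on $[N-m,m]$) and the identity $\binom{m}{2}+\binom{N-m}{2}=\binom{N}{2}-m(N-m)$, which the paper leaves as ``direct calculations.''
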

\begin{proof}
In this case, we have
\begin{equation}
\sum_{t=1}^{2}\binom{|P_{t}|}{2} \le \binom{m}{2} + \binom{N-m}{2}
\end{equation}
in the first inequality in Eq.~(\ref{thm2_proof_eq}),
and thus,
\begin{equation}
\sum_{1\le i<j\le N}f_{ij}^{(N)}\left(\rho_{A_{1}\cdots A_{N}}\right) 
\le \frac{1}{d}\binom{N}{2}+\frac{d-1}{d}\left(\binom{m}{2} + \binom{N-m}{2}\right).
\end{equation}
We obtain the result through direct calculations.
\end{proof}

Let us consider scenarios where the proportion of systems that can be entangled in pure states composing an ensemble of an $N$-qudit biseparable state is limited by $\gamma$, where $1/2 \le \gamma \le 1$.
According to Corollary~\ref{cor}, the minimum of $F_{ij}^{(N)}$ is upper bounded by $T_{e}(d, N, \gamma N)$ as seen in FIG.~2.
Since $T_{e}(d, N, \gamma N)$ is an increasing function with respect to $N$,
\begin{equation}
\lim_{N \to \infty}T_{e}(d, N, \gamma N)=1 - \frac{2(d-1)\gamma(1-\gamma)}{d+1}
\end{equation}
serves as an upper bound of the minimum of $F_{ij}^{(N)}$.
This indicates that to achieve a high controlled teleportation capability, $\gamma$ should be close to $1$ at least.
Namely, the number of entangled subsystems in the pure states constituting the biseparable state impacts the controlled teleportation performance.


\section{Example: isotropic GHZ states} 
\label{sec: isotropicGHZ}

Suppose that $N \ge 3$ players $A_{1}, A_{2}, ..., A_{N}$ share an $N$-qubit isotropic GHZ state $\rho_{N}$ defined by
\begin{equation}
\label{isoGHZ}
\rho_{N} = p\ket{GHZ_{N}}\bra{GHZ_{N}}+\frac{1-p}{2^N}I_{N},
\end{equation}
where $\ket{GHZ_{N}} = \frac{1}{\sqrt 2}\left(\ket{0}^{\otimes N}+\ket{1}^{\otimes N}\right)$, $0<p<1$, and $I_{M}$ is the identity operator for $M$-qubit system.
If players $A_{3}, ..., A_{N}$ measure their systems in the $X$ basis $\{\ket{0_x},\ket{1_x}\}$, where $\ket{j_x}=\frac{1}{\sqrt2}\left(\ket{0}+(-1)^{j}\ket{1}\right)$ for $j=0, 1$, and they have the measurement outcomes $i_3,\dots ,i_N$,
then the players $A_{1}$ and $A_{2}$'s quantum state becomes
\begin{equation}
\sigma_{+}=p\ket{\phi^{+}}\bra{\phi^{+}} +\frac{1-p}{4}I_{2}
\end{equation}
if $i_3+\cdots +i_N \equiv 0\pmod 2$,
and
\begin{equation}
\sigma_{-}=p\ket{\phi^{-}}\bra{\phi^{-}} +\frac{1-p}{4}I_{2}
\end{equation}
if $i_3+\cdots +i_N \equiv 1\pmod 2$,
where $\ket{\phi^{\pm}}=\frac{1}{\sqrt 2}(\ket{00} \pm \ket{11})$.
Since
\begin{eqnarray}
f\left(\sigma_{\pm}\right)&\ge& \bra{\phi^{\pm}}\sigma_{\pm}\ket{\phi^{\pm}} \nonumber \\
&=& \frac{1}{4}+\frac{3}{4}p,
\end{eqnarray}
it follows from the definition of $f_{ij}$
\begin{equation}
f_{12}^{(N)}\left(\rho_{N}\right) \ge \frac{1}{4}+\frac{3}{4}p.
\end{equation}
On the other hand,
\begin{eqnarray}
f_{12}^{(N)}(\rho_{N}) 
&\le& p f_{12}^{(N)}\left(\ket{GHZ_{N}}\bra{GHZ_{N}}\right) \nonumber \\
&&+(1-p)f_{12}^{(N)}\left(\frac{1}{2^N}I_{N}\right) \nonumber \\
&=& \frac{1}{4}+\frac{3}{4}p.
\end{eqnarray}
Therefore, by symmetry,
\begin{equation}
\min_{1\le i<j\le N}F_{ij}^{(N)}\left(\rho_{N}\right) = \frac{1}{2}+\frac{1}{2}p.
\end{equation}

\begin{figure}[t]
\label{fig_3}
\centering
\includegraphics[width=8.5cm]{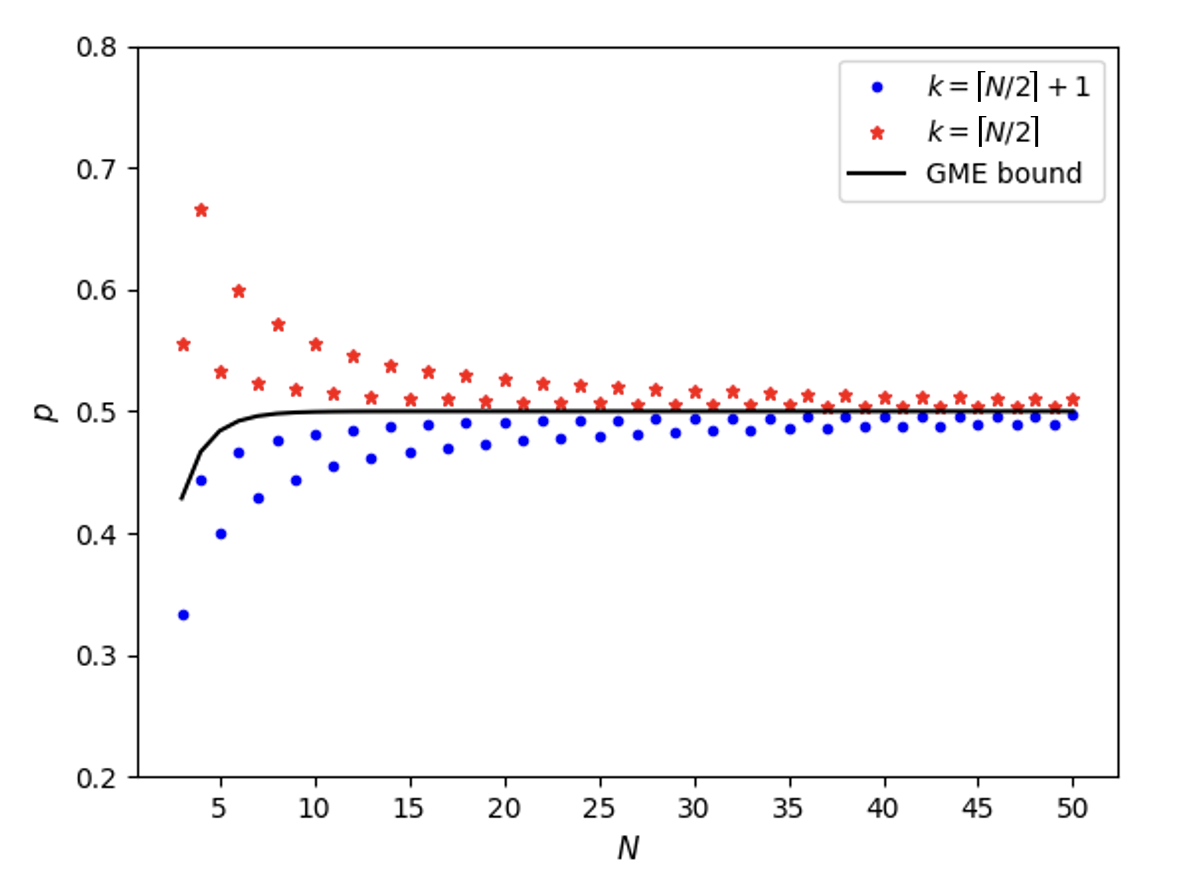}
\caption{
For the $N$-qubit isotropic GHZ state $\rho_{N}$ in Eq.~(\ref{isoGHZ}),
the value of $p$ for which the minimum of $F_{ij}^{(N)}$ becomes $T(2, N, k)$ is calculated, where $k=\lceil N/2 \rceil +1$ and $k= \lceil N/2 \rceil$.
The solid line represents the bound for being a genuinely $N$-partite entangled state, which is the right-hand side in the inequality in Eq.~(\ref{GMEbound}).
If $p$ is greater than the bound,
then $\rho_{N}$ is more useful than any $(\lceil N/2 \rceil +1)$-separable state in the controlled teleportation scheme.
However, it cannot be said that $\rho_{N}$ outperforms $\lceil N/2 \rceil$-separable states.
}
\end{figure}

Remark~\cite{YMM22} that
$\rho_{N}$ is a genuinely $N$-partite entangled state
if and only if 
\begin{equation}
\label{GMEbound}
p>\frac{2^{N-1}-1}{2^{N}-1}.    
\end{equation}
Hence, if $\rho_{N}$ is a genuinely $N$-partite entangled state,
then 
\begin{equation}
\min_{1\le i<j\le N}F_{ij}^{(N)}\left(\rho_{N}\right) > \frac{1}{2}+\frac{2^{N-1}-1}{2^{N+1}-2}.
\end{equation}
Theorem~\ref{thm1} implies that
if the minimum of $F_{ij}^{(N)}(\rho_{N})$ exceeds $T(2, N, k)$,
then $\rho_{N}$ outperforms any $k$-separable state in the controlled teleportation scheme.
Numerical analysis confirms that
if $\rho_{N}$ is a genuinely $N$-partite entangled state,
then it is more useful than $(\left\lceil N/2\right\rceil+1)$-separable states in the controlled teleportation scheme, as illustrated in FIG.~3.
However, if $k \ge \left\lceil N/2\right\rceil$,
then examples demonstrating genuinely multipartite entangled states with lower controlled teleportation capability than $k$-separable states can be easily found using Theorem~\ref{thm2}.

\section{Conclusion} 
\label{sec: conclusion}

In this paper, we have investigated the controlled teleportation capability with respect to $k$-separability
and established thresholds that $k$-separable states cannot exceed.
These thresholds are tight criteria because they can be attained with $k$-separable states.
Based on these results, we have shown that genuine multipartite entanglement is not a precondition for achieving a high controlled teleportation capability.
We have also demonstrated that controlled teleportation performance depends on the number of entangled subsystems in the pure states composing the biseparable state.
As an instance, we have examined the $N$-qubit isotropic GHZ states, 
and have illustrated that if an $N$-qubit isotropic GHZ state is a genuinely multipartite entangled state, 
then it has controlled teleportation capability that outperforms $k$-separable states, where $k = \lceil N/2 \rceil +1$. 
However, there exists a genuinely $N$-partite entangled state with lower controlled teleportation capability than $k$-separable states, when $k \ge \lceil N/2 \rceil$.

Based on our findings, we can consider the following future works.
The activation of genuine multipartite entanglement is a fascinating feature~\cite{YMM22, PV22}.
There exists a biseparable state $\rho$ such that $\rho^{\otimes n}$ is a genuinely multipartite entangled state for some $n$.
With this feature,
investigating whether there exists a quantum state $\rho$ that enhances the controlled teleportation capability for some $n$ could be an interesting task.
It could also be intriguing to investigate if similar results can be obtained in a more general controlled teleportation scheme.
For example, we have considered cases where all players involved in the control part performed orthogonal measurements,
but one could also explore scenarios where general measurements are conducted.

We can also look into other multi-party quantum applications such as conference key agreement~\cite{CL05} or quantum secret sharing~\cite{HBB99}.
Indeed, for conference key agreement, multipartite private states capable of having a perfect key are genuinely multipartite entangled states~\cite{DBW21}.
For quantum secret sharing,
there are quantum states that can have a perfect key for secret sharing~\cite{CL21a, CL21b}, 
and this form is similar to multipartite private states, 
so it could be proven that these quantum states are genuinely multipartite entangled states.
However, it has been shown that genuine multipartite entanglement is not necessarily essential to obtain a nonzero key rate for conference key agreement~\cite{CKB21}.
The quantum state considered in Ref.~\cite{CKB21} is similar to the quantum state in Eq.~(\ref{thm3_state}).
It could be interesting to investigate whether the degree of separability can determine the limits of the key rate achievable for conference key agreement or quantum secret sharing.

\section*{ACKNOWLEDGMENTS}
M.C. acknowledges support from the National Research Foundation (NRF) of Korea grant funded by the Korea
Government (Grant No. NRF-2022M3K2A1083890).
E.B. acknowledges support from the NRF of Korea grant funded by the Ministry of Science and ICT (MSIT) (Grant No. NRF-2022R1C1C2006396).
G.L. acknowledges support from the NRF ok Korea grant funded by the MSIT (No. NRF-2022M3K2A1083859 and NRF-2022R1F1A1068197) 
and Creation of the Quantum Information Science R$\&$D Ecosystem (No. 2022M3H3A106307411) through the NRF of Korea funded by the MSIT.

\bibliography{GMEteleportation}

\end{document}